\let\NAT@parse\undefined
\newenvironment{myproof}[1] {\noindent\hspace{2em}{\itshape {#1}: }}{\hspace*{\fill}~\QED\par\endtrivlist\unskip}
\newtheorem{assumption}{Assumption}
\newtheorem{definition}{Definition}
\newtheorem{theorem}{Theorem}
\newtheorem{proposition}{Proposition}
\newtheorem{lemma}{Lemma}
\newtheorem{claim}{Claim}
\newtheorem{remark}{Remark}
\def\doi{10.1109/LCSYS.2023.3286174}
\begin{document}

\title{On an integral variant of incremental input/output-to-state stability and its use as a notion of nonlinear detectability}

\author{
	Julian D. Schiller and Matthias A. Müller
	\thanks{This work was funded by the Deutsche Forschungsgemeinschaft (DFG, German Research Foundation) --  426459964. \textit{(Corresponding author: Julian D. Schiller.)}}
	\thanks{The authors are with the Leibniz University Hannover, Institute of Automatic Control, 30167 Hannover, Germany (e-mail: schiller@irt.uni-hannover.de; mueller@irt.uni-hannover.de).}
}

\maketitle
\thispagestyle{empty}
\thispagestyle{copyright}

\begin{abstract}
	We propose a time-discounted integral variant of incremental input/output-to-state stability (i-iIOSS) together with an equivalent Lyapunov function characterization.
	Continuity of the i-iIOSS Lyapunov function is ensured if the system satisfies a certain continuity assumption involving the Osgood condition.
	We show that the proposed i-iIOSS notion is a necessary condition for the existence of a robustly globally asymptotically stable observer mapping in a time-discounted ``$\boldsymbol{L^2}$-to-$\boldsymbol{L^{\infty}}$'' sense.
	In combination, our results provide a general framework for a Lyapunov-based robust stability analysis of observers for continuous-time systems, which in particular is crucial for the use of optimization-based state estimators (such as moving horizon estimation).	
\end{abstract}

\begin{IEEEkeywords}
	Incremental system properties, nonlinear systems, stability, detectability, state estimation.
\end{IEEEkeywords}

\section{Introduction}
\label{sec:intro}

\IEEEPARstart{T}{he} concept of incremental IOSS (i-IOSS) was originally proposed in \cite{Sontag1997} as a more appropriate notion of nonlinear detectability than IOSS (which can merely be viewed as ``zero-detectability'').
Introduced in an ``$L^\infty$-to-$L^\infty$'' sense (where we adopt the terminology from~\cite{Sontag1998}), it has been shown that a continuous-time system must necessarily satisfy the i-IOSS property to admit a robustly stable observer, and its discrete-time analogue has become the standard in the field of optimization-based state estimation, cf.~\cite{Knuefer2023, Allan2021a, Schiller2023c}.

The characterization of system properties via Lyapunov functions linked to converse theorems that establish necessity of their existence has turned out to be very useful for system analysis and the design of controllers and observers.
Such results are available for, e.g., global asymptotic stability (GAS) in~\cite{Lin1996} and (non-incremental) IOSS in~\cite{Krichman2001}, albeit under the condition that external signals (e.g., time-varying parameters, disturbances) of the system take values in a compact set.
Converse Lyapunov theorems for the incremental ``$L^2$-to-$L^\infty$'' versions of GAS and input-to-state stability (ISS) were considered in~\cite{Angeli2002} and~\cite{Angeli2009}, respectively, where the condition of compactness could be weakened by using a dissipation inequality in integral form along with relaxing the requirement of smoothness of the Lyapunov function to mere continuity.

Time-discounted variants of i-IOSS were proposed in \cite{Knuefer2020,Allan2021} for discrete-time systems, where it was shown that discounting past disturbances appears very natural and even without loss of generality.
A corresponding converse Lyapunov result is provided in \cite{Allan2021}, which is structurally easier and more intuitive to establish with such a discount factor than without, as is the case in, e.g., \cite{Lin1996,Krichman2001,Angeli2002,Angeli2009}.
Moreover, \mbox{i-IOSS} with time-discounting and its associated Lyapunov function are crucial for recent results in the field of optimization-based state estimation for discrete-time systems, cf., e.g., \cite{Knuefer2023,Allan2021a,Schiller2023c}.

Inspired by these works, we propose a discrete-time integral (``$L^2$-to-$L^\infty$'') variant of i-IOSS for continuous-time systems, namely i-iIOSS (Section~\ref{sec:setup}).
As the main contribution, we show that i-iIOSS is equivalent to the existence of a continuous i-iIOSS Lyapunov function (Section~\ref{sec:conv}), where an exponential decay can be used without loss of generality (cf. Remark~\ref{rem:exp}).
Our proofs use similar tools as in previous works on incremental integral ISS \cite{Angeli2009} and i-IOSS in the discrete-time setting \cite{Allan2021}; however, we point out that the presented results do not straightforwardly follow from them.
In particular, continuity of the Lyapunov function candidate is shown by replacing the standard local Lipschitz assumption on the dynamics $f$ by a global property involving the Osgood condition~\cite{Osgood1898}, which is quite novel in the context of control systems.
As a byproduct, based on this assumption, we formally prove global existence and uniqueness of system trajectories by adapting the results from \cite{Lipovan2000,Bihari1956} to the generic class of inputs considered here.

As second contribution, we propose a time-discounted integral ``$L^2$-to-$L^\infty$" variant of robust global asymptotic stability and show necessity of i-iIOSS for a system to admit a general observer mapping satisfying this property (Section~\ref{sec:detect}).
Asking such a stability property from an observer is advantageous for several reasons: first, it can be seen as accounting for the disturbance energy under fading memory and thus allows for a physical interpretation; second, it directly implies an $L^\infty$ error bound and thus combines the advantages of classical and integral ISS properties.
In combination, we provide a general framework for a Lyapunov-based robust stability analysis of observers in continuous-time. This is an essential tool in the context of moving horizon estimation in~\cite{Schiller2023b}, where we also provide constructive conditions for \mbox{i-iIOSS} Lyapunov functions that can be efficiently verified~in~practice.

\subsubsection*{Notation}
Let $\mathbb{R}_{\geq0}$ denote the set of all non-negative real values.
By $|x|$, we indicate the Euclidean norm of the vector $x\in\mathbb{R}^n$. For a measurable, essentially bounded function $z:\mathbb{R}_{\geq0}\rightarrow\mathbb{R}^n$, the essential sup-norm is defined as $\|z\| = \mathrm{ess}\sup_{t\geq0}|z(t)|$, and for the restriction of $z$ to an interval $[a,b]$ with $a,b\geq0$ by $\|z\|_{a:b} = \mathrm{ess}\sup_{t\in[a,b]}|z(t)|$.
We recall that a function $\alpha:\mathbb{R}_{\geq 0}\rightarrow\mathbb{R}_{\geq 0}$ is of class $\mathcal{K}$ if it is continuous, strictly increasing, and satisfies $\alpha(0)=0$; if additionally $\alpha(s)=\infty$ for $s\rightarrow\infty$, it is of class $\mathcal{K}_{\infty}$.

\section{Setup and Preliminaries}
\label{sec:setup}

We consider continuous-time systems of the form%
\begin{subequations}\label{eq:sys}
	\begin{align}
	\dot{x}(t) &= f(x(t),u(t),d(t)),\label{eq:sys_1}\\
	y(t) &= h(x(t),u(t),d(t))\label{eq:sys_2}
	\end{align}
\end{subequations}
with states $x\in\mathcal{X}\subseteq\mathbb{R}^n$ ($0\in\mathcal{X}$), outputs $y\in\mathcal{Y}\subseteq\mathbb{R}^p$ ($0\in\mathcal{Y}$), and time $t\geq0$.
The input $u$ and the time-varying parameter $d$ are measurable, locally essentially bounded functions taking values in $\mathcal{U}\subseteq\mathbb{R}^m$ and $\mathcal{D}\subseteq\mathbb{R}^q$ with $0\in\mathcal{U},\mathcal{D}$, and we denote the set of such functions as $\mathcal{M}_{\mathcal{U}}$ and $\mathcal{M}_{\mathcal{D}}$, respectively.
The solution of \eqref{eq:sys_1} at $t\geq0$ with initial state $\chi\in\mathcal{X}$ and input signals $u\in\mathcal{M}_{\mathcal{U}}$ and $d\in\mathcal{M}_{\mathcal{D}}$ is denoted by $x(t,\chi,u,d)$, and the output signal by $y(t,\chi,u,d) := h(x(t,\chi,u,d),u(t),d(t))$.

\begin{assumption}\label{ass:f}
	The function $f:\mathcal{X}\times\mathcal{U}\times{\mathcal{D}}\rightarrow \mathcal{X}$ satisfies $f(0,0,0)=0$ and
	\begin{align}
		&\ |f(x_1,u_1,d_1)-f(x_2,u_2,d_2)|\nonumber\\
		\leq &\ \kappa_1\left(|(x_1,u_1,d_1)-(x_2,u_2,d_2)|\right)\label{eq:ass_f_k}
	\end{align}
	for all $x_1,x_2\in\mathcal{X}$, all $u_1,u_2\in\mathcal{U}$, and all $d_1,d_2\in\mathcal{D}$, where $\kappa_1: \mathbb{R}_{\geq0} \rightarrow \mathbb{R}_{\geq0}$ is continuous, non-decreasing, $\kappa_1(0)=0$, $\kappa_1(s)>0$ for all $s>0$, and
	\begin{equation}\label{eq:ass_f}
		\int_{0}^{1}\frac{ds}{\kappa_1(3s)} = \infty, \quad \int_{1}^{\infty}\frac{ds}{\kappa_1(3s)} = \infty.\\[1.5ex]
	\end{equation}
\end{assumption}

\begin{assumption}\label{ass:h}
	The function $h$ satisfies
	\begin{equation}
	|h(x_1,u_1,d)-h(x_2,u_2,d)|\leq \kappa_2(|(x_1,u_1)-(x_2,u_2)|)
	\end{equation}
	for some $\kappa_2\in\mathcal{K}_{\infty}$, for all $x_1,x_2\in\mathcal{X}$ and all $u_1,u_2\in\mathcal{U}$ uniformly in $d\in\mathcal{D}$.
\end{assumption}

Assumption~\ref{ass:f} is essential for proving the converse Lyapunov theorem below (the factor $3$ in~\eqref{eq:ass_f} is required for technical reasons).
It replaces the usual assumption of $f$ being locally Lipschitz (which is not suitable in our case, cf.~Remark~\ref{rem:local_L} in the appendix) and ensures global existence and uniqueness of solutions of~\eqref{eq:sys}.
Inequality \eqref{eq:ass_f_k} together with the first equation in~\eqref{eq:ass_f} is similar to the so-called Osgood condition, which was originally proposed in \cite{Osgood1898} to establish local uniqueness of solutions without employing a Lipschitz property.
The second equation in \eqref{eq:ass_f} ensures that these solutions exist globally in time.
(A similar condition is, in fact, necessary for the global existence of solutions to the scalar differential equation $\dot{v}=\kappa_1(v)$, cf.~\cite{Constantin1995}.)
Valid functions that satisfy Assumption~\ref{ass:f} are, e.g., $s \mapsto s$, $s \mapsto s\ln(s+1)$, cf. also \cite{Constantin1995,Lipovan2000,Osgood1898}.
While especially the second condition in~\eqref{eq:ass_f} may be a limitation, we point out that global existence of solutions is often assumed in the literature and significantly facilitates the exposition.

The main properties are summarized in the following proposition; the proof is shifted to the appendix (cf. Section~\ref{sec:app_f}) and requires a straightforward extension of the results from \cite{Lipovan2000,Bihari1956} by addressing the generic class of inputs considered here.

\begin{proposition}\label{prop:f}
	Let Assumption 1 hold. Then, \eqref{eq:sys_1} admits a unique solution defined on $\mathbb{R}_{\geq0}$ for all $\chi\in\mathcal{X}$, all $u\in\mathcal{M}_{\mathcal{U}}$, and all $d\in\mathcal{M}_{\mathcal{D}}$.
\end{proposition}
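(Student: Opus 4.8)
The plan is to regard~\eqref{eq:sys_1}, for fixed $u\in\mathcal{M}_{\mathcal{U}}$ and $d\in\mathcal{M}_{\mathcal{D}}$, as the Carath\'eodory differential equation $\dot x = g(t,x)$ with $g(t,x):=f(x,u(t),d(t))$, and to split the argument into (a)~local existence, (b)~absence of a finite escape time, and (c)~uniqueness; the two divergence conditions in~\eqref{eq:ass_f} will enter (b) and (c), respectively. For~(a), note that $g$ is measurable in $t$ and continuous in $x$ by Assumption~\ref{ass:f}, and that $f(0,0,0)=0$ together with~\eqref{eq:ass_f_k} gives $|g(t,x)|\le\kappa_1(|x|+|u(t)|+|d(t)|)$, which for every $b,T<\infty$ is bounded on $\{|x|\le b\}\times[0,T]$ by the constant $\kappa_1(b+\|u\|_{0:T}+\|d\|_{0:T})$ (using that $\kappa_1$ is non-decreasing and $u,d$ are locally essentially bounded). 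Carath\'eodory's existence theorem then yields a locally absolutely continuous solution, which extends to a maximal interval $[0,T_{\max})$; if $\mathcal{X}\subsetneq\mathbb{R}^n$, one first extends $f$ to $\mathbb{R}^n\times\mathcal{U}\times\mathcal{D}$ preserving~\eqref{eq:ass_f_k} and argues with the extension.

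For~(b), fix $T<\infty$, put $M:=\|u\|_{0:T}+\|d\|_{0:T}<\infty$, and consider $\eta(t):=|x(t)|+M$ on $[0,T]\cap[0,T_{\max})$. The integral form of~\eqref{eq:sys_1}, the triangle inequality, and~\eqref{eq:ass_f_k} give $\eta(t)\le\eta(0)+\int_0^t\kappa_1\big(|x(s)|+|u(s)|+|d(s)|\big)\,ds\le\eta(0)+\int_0^t\kappa_1(\eta(s))\,ds$. Comparing $\eta$ with the maximal solution $z$ of the scalar problem $\dot z=\kappa_1(z)$, $z(0)=\eta(0)$ --- which exists on all of $\mathbb{R}_{\geq0}$ since, by the substitution $s\mapsto3s$ and the monotonicity and positivity of $\kappa_1$, the identity $\int_1^\infty ds/\kappa_1(3s)=\infty$ implies $\int^\infty ds/\kappa_1(s)=\infty$, so the scalar equation has no finite escape time (cf.~\cite{Constantin1995}) --- yields $\eta(t)\le z(t)<\infty$ for all $t$; this comparison is precisely the Bihari-type estimate of~\cite{Bihari1956}, which carries over to the present Carath\'eodory setting as in~\cite{Lipovan2000}. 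Hence $|x|$ remains bounded on $[0,T_{\max})$ whenever $T_{\max}\le T$, and since a bounded solution on a finite maximal interval can always be continued, $T_{\max}=\infty$.

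For~(c), let $x_1,x_2$ be two solutions on a common interval with $x_1(0)=x_2(0)=\chi$ and the same $u,d$, and set $v(t):=|x_1(t)-x_2(t)|$, which is absolutely continuous with $v(0)=0$. Since $u_1=u_2$ and $d_1=d_2$, \eqref{eq:ass_f_k} gives $\dot v(t)\le|\dot x_1(t)-\dot x_2(t)|=|f(x_1(t),u(t),d(t))-f(x_2(t),u(t),d(t))|\le\kappa_1(v(t))$ for a.e.\ $t$. If $v(t_1)>0$ for some $t_1$, set $t_0:=\max\{t\le t_1:v(t)=0\}$, so that $v>0$ on $(t_0,t_1]$; dividing by $\kappa_1(v)>0$ and integrating over $[t_0+\varepsilon,t_1]$ gives $\int_{v(t_0+\varepsilon)}^{v(t_1)}ds/\kappa_1(s)\le t_1-t_0-\varepsilon$, and letting $\varepsilon\downarrow0$ the left-hand side tends to $\int_0^{v(t_1)}ds/\kappa_1(s)=\infty$ --- using that $\int_0^1 ds/\kappa_1(3s)=\infty$ implies $\int_0 ds/\kappa_1(s)=\infty$ --- which is a contradiction. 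Hence $v\equiv0$ and the solution is unique.

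I expect the main obstacle to lie not in these scalar comparison arguments, which are classical, but in making them rigorous in the low-regularity setting forced by the generic inputs $u,d$: solutions are only absolutely continuous and satisfy~\eqref{eq:sys_1} merely almost everywhere, so one must carefully verify the hypotheses and the continuation principle of Carath\'eodory's theorem, justify $\tfrac{d}{dt}|w(t)|\le|\dot w(t)|$ a.e.\ for absolutely continuous $w$ (or, equivalently, work only with the integral forms), and check that the Bihari/comparison estimates of~\cite{Bihari1956,Lipovan2000} remain valid when the bound on $\dot x$ is only locally $L^1$ in $t$ rather than continuous --- this is the extension alluded to just before the proposition. Once it is in place, the two conditions in~\eqref{eq:ass_f} do the rest: the growth of $\kappa_1$ at infinity rules out finite escape, and its behaviour at the origin forces uniqueness.
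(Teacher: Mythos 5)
Your argument is correct, and modulo the standard Carath\'eodory-setting technicalities you explicitly flag (a.e.\ differential inequalities for $t\mapsto|w(t)|$, comparison lemmas when the time dependence is only locally $L^1$), it proves the proposition; but it is organized quite differently from the paper. The paper first establishes a quantitative two-trajectory estimate (Lemma~\ref{lem:f}): for solutions with possibly \emph{different} initial conditions \emph{and different} inputs, a Bihari/Osgood argument with $\bar{\kappa}_1(s)=\kappa_1(3s)$ (the factor $3$ arising from splitting $\kappa_1(|x_\Delta|+|u_\Delta|+|d_\Delta|)$ into three terms) gives the explicit bound $|x_1(t)-x_2(t)|\le\rho^{-1}(\rho(c)e^t)$ with $\rho(s)=e^{G(s)}$, $G(s)=\int_1^s dr/\bar{\kappa}_1(r)$, and $c$ as in \eqref{eq:lem_f_c}; Proposition~\ref{prop:f} is then a corollary, with global existence obtained by comparing against the zero trajectory (using $f(0,0,0)=0$) and uniqueness by setting $c=0$. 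You instead argue in three classical steps: Carath\'eodory local existence (arguably the more precise citation than the paper's appeal to Peano, since $f(x,u(t),d(t))$ is only measurable in $t$), absence of finite escape by comparing $|x(t)|+M$ with the maximal solution of $\dot z=\kappa_1(z)$, and Osgood uniqueness directly on $v=|x_1-x_2|$; because the inputs coincide in the uniqueness step and are absorbed into $M$ in the boundedness step, you correctly need only the factor-$1$ divergence conditions, which indeed follow from \eqref{eq:ass_f} by monotonicity of $\kappa_1$. What your route buys is a more elementary and modular proof of the proposition itself; what the paper's route buys is the uniform difference bound \eqref{eq:lem_f}, valid for distinct inputs, which is reused later (in Lemma~\ref{lem:xy_bound} and the continuity proof of Claim~\ref{prop:c}) and is the reason the factor $3$ appears in Assumption~\ref{ass:f} in the first place — so your proof covers the stated proposition but does not substitute for Lemma~\ref{lem:f} elsewhere in the paper.
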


The topic of this paper is a notion of nonlinear detectability in terms of the following notion.

\begin{definition}[\protect{i-iIOSS}]
	\label{def:dIOSS}
	The system~\eqref{eq:sys} is incrementally integral input/output-to-state stable (i-iIOSS) if there exist some $\alpha,\alpha_{x},\alpha_{u},\alpha_{y}\in\mathcal{K}_{\infty}$, and $\lambda\in[0,1)$ such that
	\begin{align}	
		\label{eq:dIOSS}
		\alpha(|x_\Delta(t)|) &\leq
		\alpha_{x}(|\chi_\Delta|)\lambda^{t} \nonumber \\
		&\hspace{1.3ex} + \int_0^t  \lambda^{t-\tau}\big(\alpha_u(|u_\Delta(\tau)|) +  \alpha_{y}(|y_\Delta(\tau)|)\big)d\tau
	\end{align}
	for all $t\geq0$, all $\chi_1,\chi_2\in\mathcal{X}$, all $u_1,{u}_2\in\mathcal{M}_{\mathcal{U}}$, and all $d\in\mathcal{M}_{\mathcal{D}}$, where $z_\Delta = z_1-z_2$ for $z=\{\chi,x,u,y\}$ with $x_i(t) = x(t,\chi_i,u_i,d)$ and $y_i(t) = y(t,\chi_i,u_i,d)$, $i=1,2$.
\end{definition}

We additionally propose the following equivalent (cf.~Theorem~\ref{thm:converse} below) Lyapunov function characterization.

\begin{definition}[i-iIOSS Lyapunov function]
	\label{def:IOSS_Lyap}
	A function $U:\mathcal{X}\times\mathcal{X}\rightarrow\mathbb{R}_{\geq 0}$ is an i-iIOSS Lyapunov function if it is continuous and there exist functions $\alpha_1,\alpha_2,\sigma_w,\sigma_{y}\in\mathcal{K}_{\infty}$ and a constant $\lambda\in[0,1)$ such that
	\begin{subequations}
		\label{eq:IOSS_Lyap}
		\begin{align}
		\label{eq:IOSS_Lyap_1}
		&\ \alpha_1(|\chi_\Delta|)\leq U(\chi_1,\chi_2) \leq \alpha_2(|\chi_\Delta|),\\	
		&\ U(x_1(t),x_2(t)) \label{eq:IOSS_Lyap_2}\\
		\leq&\ U(\chi_1,\chi_2)\lambda^{t} + \int_{0}^{t}\lambda^{t-\tau}\big(\sigma_{u}(|u_\Delta(\tau)|) +\sigma_{y}(|y_{\Delta}(\tau)|)\big)d\tau \nonumber
		\end{align}
	\end{subequations}
	for all $t\geq0$, all $\chi_1,\chi_2\in\mathcal{X}$, all $u_1,{u}_2\in\mathcal{M}_{\mathcal{U}}$, and all $d\in\mathcal{M}_{\mathcal{D}}$, where $z_\Delta = z_1-z_2$ for $z=\{\chi,x,u,y\}$ with $x_i(t) = x(t,\chi_i,u_i,d)$ and $y_i(t) = y(t,\chi_i,u_i,d)$, $i=1,2$.
\end{definition}

The integral form of \eqref{eq:dIOSS} and \eqref{eq:IOSS_Lyap_2} together with the continuity of $U$ is motivated by \cite{Angeli2002,Angeli2009}, originally employed to allow for a non-compact input set $\mathcal{D}$, where smooth converse Lyapunov theorems usually fail, cf. \cite[Rem.~2.4]{Angeli2002}, \cite[Sec.~8]{Lin1996}.
The exponential decrease in~\eqref{eq:dIOSS} and \eqref{eq:IOSS_Lyap_2} is motivated by recent results in the discrete-time literature, where this is crucial to develop full information and moving horizon estimation schemes with suitable stability properties, cf.~\cite{Schiller2023c,Allan2021a}.
In~\cite{Schiller2023b}, we showed that this carries over to the continuous-time setting, where we also provide sufficient conditions for the construction of i-iIOSS Lyapunov functions for special systems classes.

\begin{remark}\label{rem:exp}
	Considering an exponential decrease in Lyapunov coordinates is without loss of generality; in fact, one can straightforwardly show that~\eqref{eq:IOSS_Lyap_2} is equivalent to a dissipation inequality in the form of $U(x_1(t),x_2(t)) - U(\chi_1,\chi_2) \leq \int_{0}^{t}(-\alpha_3(|x_\Delta(\tau)|) + {\sigma}_{u}(|u_{\Delta}(\tau)|)+\sigma_y(|y_\Delta(\tau)|))ds$ with $\alpha_3\in\mathcal{K}_{\infty}$, which follows by application of the standard comparison lemma and~\cite[Prop.~13]{Praly1996}, similar to, e.g.,~\cite[Lem.~10]{Sontag1997}.
\end{remark}

\begin{remark}\label{rem:integral}
	A key advantage of the discrete-time i-IOSS counterpart is that discounted summation and discounted maximization are in some sense equivalent, cf.~\cite{Knuefer2020,Allan2021}.
	This does not carry over to the continuous-time setting (the discounted integral in~\eqref{eq:dIOSS} could indeed be transferred to a discounted $L^{\infty}$-norm bound, but not \textit{vice versa}, unless strong regularity assumptions on $u$ are enforced).
	As a result, the proposed notion from Definition~\ref{def:dIOSS} implies i-IOSS in an ``$L^{\infty}$-to-$L^{\infty}$'' sense with time-discounting (similar to \cite[Def.~2.4]{Allan2021}) and without discounting \cite[Def.~22]{Sontag1997}, cf.~\cite[Prop.~1]{Schiller2023b}.
	Investigating some converse implications may be subject of future research.
\end{remark}

\section{Converse Lyapunov theorem for i-iIOSS}
\label{sec:conv}

We now show equivalence between the proposed i-iIOSS characterizations from Definitions~\ref{def:dIOSS} and~\ref{def:IOSS_Lyap}.

\begin{theorem}\label{thm:converse}
	Let Assumptions~\ref{ass:f} and~\ref{ass:h} hold. The system~\eqref{eq:sys} is i-iIOSS if and only if there exists an i-iIOSS Lyapunov function.
\end{theorem}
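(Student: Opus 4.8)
The plan is to establish both implications. Sufficiency is immediate: if an i-iIOSS Lyapunov function $U$ exists, then chaining the two bounds in~\eqref{eq:IOSS_Lyap} yields
\[
\alpha_1(|x_\Delta(t)|)\leq U(x_1(t),x_2(t))\leq \alpha_2(|\chi_\Delta|)\lambda^{t}+\int_0^t\lambda^{t-\tau}\big(\sigma_u(|u_\Delta(\tau)|)+\sigma_y(|y_\Delta(\tau)|)\big)d\tau,
\]
which is precisely~\eqref{eq:dIOSS} with $\alpha=\alpha_1$, $\alpha_x=\alpha_2$, $\alpha_u=\sigma_u$, $\alpha_y=\sigma_y$.

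For necessity, I would take the functions $\alpha,\alpha_x,\alpha_u,\alpha_y\in\mathcal{K}_\infty$ and $\lambda\in[0,1)$ from Definition~\ref{def:dIOSS} (possibly after a standard, lossless tightening of these functions that is convenient for the continuity step) and define the candidate
\[
U(\chi_1,\chi_2):=\sup\Big\{\alpha(|x_\Delta(t)|)\lambda^{-t}-\int_0^t\lambda^{-\tau}\big(\alpha_u(|u_\Delta(\tau)|)+\alpha_y(|y_\Delta(\tau)|)\big)d\tau\Big\},
\]
where the supremum ranges over all $t\geq0$, $u_1,u_2\in\mathcal{M}_{\mathcal{U}}$ and $d\in\mathcal{M}_{\mathcal{D}}$, and $x_i,y_i$ are the associated solutions and outputs, which are well-defined by Proposition~\ref{prop:f}. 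Multiplying~\eqref{eq:dIOSS} by $\lambda^{-t}$ shows that every term in the supremum is $\leq\alpha_x(|\chi_\Delta|)$, while the choice $t=0$ gives $U(\chi_1,\chi_2)\geq\alpha(|\chi_\Delta|)$; hence $U$ is finite, nonnegative, and satisfies~\eqref{eq:IOSS_Lyap_1} with $\alpha_1=\alpha$ and $\alpha_2=\alpha_x$.

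The dissipation inequality~\eqref{eq:IOSS_Lyap_2} follows from the cocycle property of solutions (again by Proposition~\ref{prop:f}): for fixed $s\geq0$ and $u_1,u_2,d$, take future inputs and a time attaining $U(x_1(s),x_2(s))$ up to $\varepsilon$, concatenate these inputs with $u_1,u_2,d$ on $[0,s]$, and evaluate the defining supremum of $U(\chi_1,\chi_2)$ at the concatenated inputs and time $t+s$. Splitting the integral at $s$ and using $\lambda^{-(t+s)}=\lambda^{-t}\lambda^{-s}$, the contribution of $[s,t+s]$ reproduces exactly $\lambda^{-s}$ times the bracket for $U(x_1(s),x_2(s))$, while that of $[0,s]$ equals $\int_0^s\lambda^{-\tau}(\alpha_u(|u_\Delta(\tau)|)+\alpha_y(|y_\Delta(\tau)|))d\tau$; rearranging and passing to the supremum over the future inputs (and $\varepsilon\to0$) yields~\eqref{eq:IOSS_Lyap_2} with $\sigma_u=\alpha_u$, $\sigma_y=\alpha_y$.

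The remaining and hardest step is continuity of $U$. Lower semicontinuity is automatic once one knows that $x(t,\cdot,u,d)$ depends continuously on the initial state, uniformly in $u,d$ over finite horizons, and this is exactly where Assumption~\ref{ass:f} enters: from~\eqref{eq:ass_f_k} and a Grönwall inequality of Bihari--LaSalle type (the estimate also underlying Proposition~\ref{prop:f}) one obtains $|x(t,\chi,u,d)-x(t,\chi',u,d)|\leq G^{-1}\big(G(|\chi-\chi'|)+t\big)$ with $G(r)=\int_1^r ds/\kappa_1(s)$, where the Osgood conditions~\eqref{eq:ass_f} make $G$ a bijection of $(0,\infty)$ onto $\mathbb{R}$, yielding for every horizon $T$ a common modulus of continuity in $\mathcal{K}$ valid on $[0,T]$ uniformly in $u,d$. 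For upper semicontinuity I would use $\sup a-\sup b\leq\sup(a-b)$, Assumptions~\ref{ass:f}--\ref{ass:h}, and crucially the discount factor $\lambda<1$ together with the i-iIOSS bound~\eqref{eq:dIOSS} to show that near-optimal inputs keep the relevant states (and, via the discounting, the relevant time instants) within a region that is bounded uniformly on bounded sets of initial conditions; combining this localization with the flow's modulus of continuity and the continuity of the comparison functions bounds $|U(\chi_1,\chi_2)-U(\chi_1',\chi_2')|$ by a $\mathcal{K}$-function of $|(\chi_1,\chi_2)-(\chi_1',\chi_2')|$. I expect this localization argument — getting the discounting and the i-iIOSS bound to cooperate so that only a bounded portion of the supremum matters — to be the main obstacle of the proof.
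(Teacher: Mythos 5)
Your sufficiency argument, your choice of Lyapunov candidate as a supremum of a discounted ``i-iIOSS deficit'', the bounds $\alpha\le U\le\alpha_x$, and the concatenation/cocycle argument for the dissipation inequality all match the paper's strategy and are correct as far as they go (your version of the dissipation step is in fact slightly cleaner, since it keeps the original $\lambda$ and $\alpha_u$). The genuine gap is exactly the step you flag as the ``main obstacle'': with the candidate as you define it, the localization of near-optimal times and inputs cannot be extracted. Plugging the i-iIOSS bound~\eqref{eq:dIOSS}, multiplied by $\lambda^{-t}$, into your bracket makes the subtracted integral cancel \emph{exactly} against the integral coming from~\eqref{eq:dIOSS}: every admissible term is bounded by $\alpha_x(|\chi_\Delta|)$, but there is no leftover negative term. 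Consequently, for an $\epsilon$-near-optimal choice $(t^{\epsilon},u_1^{\epsilon},u_2^{\epsilon},d^{\epsilon})$ you can conclude neither an upper bound on $t^{\epsilon}$ nor a bound on the discounted input energy $\int_0^{\infty}\lambda^{-\tau}\alpha_u(|u_1^{\epsilon}(\tau)-u_2^{\epsilon}(\tau)|)d\tau$, and without these two bounds the uniform state/output estimates (the analogue of Lemma~\ref{lem:xy_bound}) and the uniform-continuity argument on compact ranges have nothing to bite on. Note also that a bound on $t^\epsilon$ intrinsically requires a positive lower bound on $|\chi_\Delta|$, which is why the diagonal must be treated separately.

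The paper resolves this by building slack into the candidate itself: the prefactor is $\lambda^{-t/2}$ rather than $\lambda^{-t}$ (so after inserting~\eqref{eq:dIOSS} a factor $\lambda^{t/2}$ survives, which on the set $1/C\le|\chi_\Delta|\le C$ forces $t^{\epsilon}\le T(C)$ once $\epsilon\le\alpha(1/C)/2$), and it subtracts $2\alpha_u$ integrated over all of $[0,\infty)$ rather than $\alpha_u$ over $[0,t]$ (so one copy cancels against~\eqref{eq:dIOSS} and the other yields the energy bound $\int_0^{\infty}\lambda^{-\tau}\alpha_u(|u_\Delta^{\epsilon}|)d\tau\le r_u(C)$), together with the normalization $\alpha_u(s)\ge\kappa_1(3s)$ so that this energy bound feeds into the Osgood/Bihari flow estimate you correctly identified. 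The price of these modifications is only cosmetic: the dissipation inequality then comes out with decay rate $\sqrt{\lambda}$ and gain $2\alpha_u$, which is absorbed by redefining $\lambda$ and $\sigma_u$. So your outline needs this modification of the candidate (or an equivalent mechanism producing the missing slack) before the continuity part can be completed; the rest of your plan then goes through along the paper's lines.
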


\begin{proof}
	\textit{Part I (Sufficiency):}
	The implication follows by applying the bounds~\eqref{eq:IOSS_Lyap_1} to~\eqref{eq:IOSS_Lyap_2}, which directly yields~\eqref{eq:dIOSS}.
	
	\textit{Part II (Necessity):}
	The proof uses and combines similar arguments as in previous converse theorems and Lyapunov function constructions, in particular \cite{Allan2021,Ingalls2001}; continuity of the Lyapunov function is proven in a different fashion invoking Assumptions~\ref{ass:f} and~\ref{ass:h}, cf.~Claim~\ref{prop:c} below and Appendix~\ref{sec:app_c}.

	For arbitrary $\chi_1,\chi_2\in\mathcal{X}$, we consider the following i-iIOSS Lyapunov function candidate\footnote{
		The inputs $u_1,u_2,d$ in \eqref{eq:def_lyap} are maximized over the sets $\mathcal{M}_{\mathcal{U}}$ and $\mathcal{M}_{\mathcal{D}}$, respectively, which is omitted throughout this paper for brevity.
	}
	\begin{align}
	&\ U(\chi_1,\chi_2) \nonumber\\
	:=&\ \sup_{t\geq 0,u_1,u_2,{d}}
	\lambda^{-t/2}\Big(\alpha(|x(t,\chi_1,u_1,d)-x(t,\chi_2,u_2,d)|)\nonumber\\
	&\hspace{1ex} - \int_0^\infty\lambda^{t-\tau}2\alpha_u(|u_1(\tau)-u_2(\tau)|)d\tau\nonumber\\
	&\hspace{1ex} - \int_0^t \lambda^{t-\tau}\alpha_y(|y(t,\chi_1,u_1,d)-y(t,\chi_2,u_2,d)|)d\tau \Big)\label{eq:def_lyap}
	\end{align}
	and start by establishing the bounds~\eqref{eq:IOSS_Lyap_1}.
	For the term $\alpha(|x(t,\chi_1,u_1,d)-x(t,{\chi}_2,{u}_2,d)|)$ in \eqref{eq:def_lyap}, we can directly use the estimate from i-iIOSS~\eqref{eq:dIOSS}, which yields
	\begin{align*}
		U(\chi_1,\chi_2) &\leq \sup_{t\geq 0,u_1,u_2,d}
		\alpha_x(|\chi_1-\chi_2|)\lambda^{t/2} = \alpha_x(|\chi_1-\chi_2|),
	\end{align*}
	i.e., the upper bound in~\eqref{eq:IOSS_Lyap_1} with $\alpha_2=\alpha_x$.
	The lower bound follows by considering the candidate inputs $u_1=u_2$ and $t=0$, leading to $\alpha_1=\alpha$ in~\eqref{eq:IOSS_Lyap_1}.
	
	The following claim is proven in Appendix B.
	\begin{claim}\label{prop:c}
		The function $U$ in \eqref{eq:def_lyap} is continuous on $\mathcal{X}\times\mathcal{X}$.
	\end{claim}
	It remains to establish the dissipation inequality~\eqref{eq:IOSS_Lyap_2}. To this end, consider  $\zeta_1,\zeta_2\in\mathcal{X}$, $u_1,u_2\in\mathcal{M}_{\mathcal{U}},d\in\mathcal{M}_{\mathcal{D}}$, yielding the trajectories
	$z_j(t):=x(t,\zeta_j,u_j,d)$ with $j=1,2$ for $t\geq0$.
 	We obtain
	\begin{align}
		&\ U(z_1(t),z_2(t)) \label{eq:proof_conv_1}\\
		=&\ \sup_{\bar{t}\geq 0,\bar{u}_1,\bar{u}_2,\bar{d}}
		\lambda^{-\bar{t}/2}\Big(\alpha(|x(\bar{t},z_1(t),\bar{u}_1,\bar{d})-x(\bar{t},z_2(t),\bar{u}_2,\bar{d})|)\nonumber\\
		& -\int_0^{\infty}\lambda^{\bar{t}-\tau}2\alpha_u(|\bar{u}_1(\tau)-\bar{u}_2(\tau)|) \nonumber\\
		& -\int_0^{\bar{t}}\lambda^{\bar{t}-\tau} \alpha_y(|y(\tau,z_1(t),\bar{u}_1,\bar{d})-y(\tau,z_2(t),\bar{u}_2,\bar{d})|)d\tau \Big).\nonumber
	\end{align}
	For two functions $\bar{u},u$ defined on $[0,\infty)$, let $\bar{u}\sharp_tu$ denote their concatenation at some fixed time $t\geq0$, i.e.,
	\begin{equation*}
		\bar{u}{\sharp_t}u(\tau) :=
		\begin{cases}
		u(\tau), &\tau\in[0,t]\\
		\bar{u}(\tau-t), &\tau\in(t,\infty).
		\end{cases}
	\end{equation*}
	Hence, in~\eqref{eq:proof_conv_1}, we can infer that
	\begin{align}
		&\ \alpha(|x(\bar{t},z_1(t),\bar{u}_1,\bar{d})-x(\bar{t},z_2(t),\bar{u}_2,\bar{d})|)\label{eq:proof_conv_2}\\
		=&\ \alpha(|x(\bar{t}+t,\zeta_1,\bar{u}_1{\sharp_t}u_1,\bar{d}{\sharp_t}d) -x(\bar{t}+t,\zeta_2,\bar{u}_2{\sharp_t}u_2,\bar{d}{\sharp_t}d)|).\nonumber
	\end{align}
	Similarly,
	\begin{align}
		&\ \int_0^{\infty}\lambda^{\bar{t}-\tau}2\alpha_u(|\bar{u}_1(\tau)-\bar{u}_2(\tau)|)d\tau\nonumber\\
		= &\ \int_0^{\infty}\lambda^{\bar{t}+t-\tau}2\alpha_u(|\bar{u}_1\sharp_tu_1(\tau)-\bar{u}_2\sharp_tu_2(\tau)|)d\tau\nonumber\\
		&\ - \int_0^{t}\lambda^{\bar{t}+t-\tau}2\alpha_u(|{u}_1(\tau)-{u}_2(\tau)|)d\tau
		\label{eq:proof_conv_3}
	\end{align}
	and
	\begin{align}
		&\ \int_0^{\bar{t}}\lambda^{\bar{t}-\tau}\alpha_y(|y(\tau,z_1(t),\bar{u}_1,\bar{d})-y(\tau,z_2(t),\bar{u}_2,\bar{d})|)d\tau\nonumber\\
		= &\ \int_0^{\bar{t}+t}\lambda^{\bar{t}+t-\tau}\alpha_y(|y(\bar{t}+t,\zeta_1,\bar{u}_1{\sharp_t}u_1,\bar{d}{\sharp_t}d)\nonumber\\
		&\hspace{17ex} -y(\bar{t}+t,\zeta_2,\bar{u}_2{\sharp_t}u_2,\bar{d}{\sharp_t}d)|)d\tau\nonumber\\
		& - \int_0^{t}\lambda^{\bar{t}+t-\tau}\alpha_y(|y(\tau,\zeta_1,u_1,d)-y(\tau,\zeta_2,u_2,d)|)d\tau. \label{eq:proof_conv_4}
	\end{align}
	Now define $\hat{t}:=\bar{t}+t$. Consequently, $U(z_1(t),z_2(t))$ in \eqref{eq:proof_conv_1} can be bounded using the substitutions from~\eqref{eq:proof_conv_2}--\eqref{eq:proof_conv_4} and the fact that $\lambda\leq\sqrt{\lambda}\in[0,1)$ as
	\begin{align*}
		&\ U(z_1(t),z_2(t)) \\
		\leq&\ \sup_{\hat{t}\geq 0,\hat{u}_1,\hat{u}_2,\hat{d}}
		\lambda^{(-\hat{t}+t)/2}\Big(\alpha(|x(\hat{t},\zeta_1,\hat{u}_1,\hat{d})-x(\hat{t},\zeta_2,\hat{u}_2,\hat{d})|)\nonumber\\
		&\hspace{6ex} - \int_0^{\infty}\lambda^{\hat{t}-\tau}2\alpha_u(|\hat{u}_1(\tau)-\hat{u}_2(\tau)|)d\tau \\
		&\hspace{6ex} - \int_0^{\hat{t}}\lambda^{\hat{t}-\tau} \alpha_y(|y(\hat{t},\zeta_1,\hat{u}_1,\hat{d}) - y(\hat{t},\zeta_2,\hat{u}_2,\hat{d})|) d\tau \Big)\\
		&\ + \int_0^{t}\lambda^{t-\tau}\big(2\alpha_u(|u_1(\tau)-u_2(\tau)|)\\
		&\ \hspace{12ex} + \alpha_y(|y(\tau,\zeta_1,u_1,d)-y(\tau,\zeta_2,u_2,d)|)\big)d\tau\\
		\leq&\ \sqrt{\lambda}^t U(\zeta_1,\zeta_2) + \int_0^{t}\sqrt{\lambda}^{t-\tau}\big(2\alpha_u(|u_1(\tau)-u_2(\tau)|)\\
		&\hspace{14ex} +\alpha_y(|y(\tau,\zeta_1,u_1,d)-y(\tau,\zeta_2,u_2,d)|)\big)d\tau,
	\end{align*}
	which establishes the dissipation inequality~\eqref{eq:IOSS_Lyap_2} by a suitable redefinition of $\lambda$ and hence concludes this proof.
\end{proof}

\section{Nonlinear detectability}
\label{sec:detect}

In this section, we establish necessity of i-iIOSS for the existence of an observer mapping satisfying an ISS-like robust stability property in a time-discounted ``$L^2$-to-$L^{\infty}$'' sense.
In this context, we let $u$ include all \emph{unknown} signals (such as process disturbances and measurement noise) and $d$ \emph{known} exogenous signals (such as control inputs).
Let the set $\mathcal{M}_{\mathcal{Y}}$ contain all measurable, locally essentially bounded functions defined on $[0,\infty)$ taking values in $\mathcal{Y}$.
For a function $z$ defined on $[0,\infty)$ and any $t\geq0$, we denote by $z_t$ the truncated signal given by $z_t(\tau) := z(\tau), \tau \in [0,t)$ and  $z_t(\tau):= 0,\tau\in [t,\infty)$.

\begin{definition}[State observer]\label{def:obs}	
	The mapping
	\begin{equation}
		P: \mathbb{R}_{\geq 0}\times\mathcal{X}\times\mathcal{M}_{\mathcal{U}}\times\mathcal{M}_{\mathcal{D}}\times\mathcal{M}_{\mathcal{Y}}\rightarrow{\mathcal{X}}\label{eq:obs_map}
	\end{equation}
	is a robustly globally asymptotically stable (RGAS) observer for the system~\eqref{eq:sys} if there exist functions $\beta,\beta_x,\beta_u,\beta_y\in\mathcal{K}_{\infty}$ and a constant $\eta\in[0,1)$ such that the estimate
	\begin{equation}\label{eq:obs}
		\hat{x}(t) = P(t,\bar{\chi},\bar{u}_t,d_t,\bar{y}_t), \ \hat{x}(0) = \bar{\chi}
	\end{equation}
	satisfies
	\begin{align}
		&\beta(|x(t)-\hat{x}(t)|) \leq \beta_x(|\chi-\bar{\chi}|)\eta^t \label{eq:RGAS} \\
		&+\int_0^t\eta^{t-\tau}\big(\beta_u(|u(\tau)-\bar{u}(\tau)|) + \beta_y(|y(\tau)-\bar{y}(\tau)|)\big)d\tau\nonumber
	\end{align}
 	for all $t\geq0$, all $\chi,\bar{\chi}\in\mathcal{X}$, $u,\bar{u}\in\mathcal{M}_{\mathcal{U}}$, $d\in\mathcal{M}_{\mathcal{D}}$, $\bar{y}\in\mathcal{M}_{\mathcal{Y}}$, where ${x}(\tau) = x(\tau,\chi,u,d)$ and ${y}(\tau) = y(\tau,\chi,u,d)$, $\tau\in[0,t]$.
\end{definition}

Definition~\ref{def:obs} implies that at any time $t\geq0$, $P$ causally reconstructs the state of system~\eqref{eq:sys} using (the past values of) some nominal disturbance $\bar{u}$, some measured signal $\bar{y}$, the parameter~$d$, and some initial estimate $\bar{\chi}$.
Considering $\bar{y}\neq y$ provides an additional degree of robustness and accounts for the case where the output model $h$ in \eqref{eq:sys_2} is not exact, e.g., when the data are first transformed or traverse additional networks not captured by $h$, cf. \cite{Knuefer2020} for a more detailed discussion.
Note that for the classical case with $\bar{u}\equiv0$ and $\bar{y} = y$, the estimate~\eqref{eq:RGAS} reduces to
$\beta(|x(t)-\hat{x}(t)|) \leq \beta_x(|\chi-\bar{\chi}|)\eta^t + \int_0^t \eta^{t-\tau}\beta_u(|u(\tau)|)d\tau$.
The integral term in this bound can be viewed as the energy of the true disturbance signal $u$ under fading memory and thus has a reasonable physical interpretation, compare also~\cite{Sontag1998,Praly1996}.
Moreover, the discount factor permits a direct derivation of an ``$L^\infty$-to-$L^\infty$'' error bound (cf.,~\cite[Prop.~1]{Schiller2023b}) and thus combines the advantages of classical and integral ISS properties: 
it is applicable for both unbounded disturbances that have small energy and persistent, bounded disturbances with infinite energy, and furthermore, directly implies that $|x(t)-\hat{x}(t)|\rightarrow0$ if $|u(t)| \rightarrow 0$ for $t\rightarrow\infty$.
Note also that the mapping~\eqref{eq:obs_map} covers full-order state observers, but in particular observers that do not admit a convenient state-space representation, such as moving horizon and full information estimators, cf.~\cite{Schiller2023b} and compare also~\cite{Allan2021} for a similar discussion in a discrete-time setting.
The following proposition proves necessity of i-iIOSS for the existence of an observer mapping~\eqref{eq:obs_map} satisfying~\eqref{eq:RGAS}.

\begin{proposition}
	The system~\eqref{eq:sys} admits an RGAS observer mapping in the sense of Definition~\ref{def:obs} only if it is i-iIOSS.
\end{proposition}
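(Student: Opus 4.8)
The plan is to derive the i-iIOSS estimate~\eqref{eq:dIOSS} directly from the RGAS property~\eqref{eq:RGAS}, by running the assumed observer $P$ twice on the \emph{same} nominal data — namely the data generated by one of the two trajectories under comparison — and exploiting that the right-hand side of~\eqref{eq:RGAS} collapses to $0$ when the nominal data are exact.

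First I would fix arbitrary $\chi_1,\chi_2\in\mathcal{X}$, $u_1,u_2\in\mathcal{M}_{\mathcal{U}}$, $d\in\mathcal{M}_{\mathcal{D}}$, and set $x_i := x(\cdot,\chi_i,u_i,d)$, $y_i := y(\cdot,\chi_i,u_i,d)$ for $i=1,2$, noting that $y_2$ (and its truncations) is a measurable, locally essentially bounded, $\mathcal{Y}$-valued signal — hence $y_2\in\mathcal{M}_{\mathcal{Y}}$ — by Proposition~\ref{prop:f} and Assumption~\ref{ass:h}, so that it is an admissible nominal measurement for $P$. Next, I would run the observer with initial estimate $\bar{\chi}=\chi_2$, nominal input $\bar{u}=u_2$, parameter $d$, and nominal measurement $\bar{y}=y_2$, obtaining $\hat{x}(t):=P(t,\chi_2,(u_2)_t,d_t,(y_2)_t)$. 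Applying~\eqref{eq:RGAS} with the \emph{true} trajectory taken to be $x_2$ (i.e.\ $\chi=\chi_2$, $u=u_2$, so that $y\equiv y_2$), every term on the right-hand side vanishes, so $\beta(|x_2(t)-\hat{x}(t)|)\le 0$ for all $t\ge0$; since $\beta\in\mathcal{K}_{\infty}$, this forces $\hat{x}(t)=x_2(t)$ for all $t\ge0$.

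Then I would invoke~\eqref{eq:RGAS} a second time for the \emph{same} observer run $(\chi_2,u_2,d,y_2)$, but now with the true trajectory taken to be $x_1$ (i.e.\ $\chi=\chi_1$, $u=u_1$, $y\equiv y_1$), which yields
\begin{align*}
 \beta(|x_1(t)-\hat{x}(t)|) \le&\ \beta_x(|\chi_1-\chi_2|)\eta^t \\
 &\ + \int_0^t\eta^{t-\tau}\big(\beta_u(|u_1(\tau)-u_2(\tau)|)+\beta_y(|y_1(\tau)-y_2(\tau)|)\big)d\tau .
\end{align*}
Substituting the identity $\hat{x}(t)=x_2(t)$ established above makes the left-hand side equal to $\beta(|x_\Delta(t)|)$ and the right-hand side equal to that of~\eqref{eq:dIOSS} with $\alpha=\beta$, $\alpha_x=\beta_x$, $\alpha_u=\beta_u$, $\alpha_y=\beta_y$, and $\lambda=\eta$; since $\chi_1,\chi_2,u_1,u_2,d$ were arbitrary, the system is i-iIOSS.

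The argument uses nothing beyond Definition~\ref{def:obs} and the class-$\mathcal{K}_{\infty}$ property of $\beta$; the only points requiring a line of justification are (i) that $y_2$ and its truncations genuinely belong to $\mathcal{M}_{\mathcal{Y}}$, so that feeding them to $P$ is legitimate, and (ii) the causality built into~\eqref{eq:obs}, which makes the arguments $(u_2)_t,d_t,(y_2)_t$ consistent and the pointwise identity $\hat{x}(t)=x_2(t)$ well posed for every $t$. I expect the mild measurability/boundedness bookkeeping for the output signal in (i) to be the only real obstacle; the rest is immediate from the definitions.
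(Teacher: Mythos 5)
Your proposal is correct and follows essentially the same argument as the paper: design the observer run with $\bar{\chi}=\chi_2$, $\bar{u}=u_2$, $\bar{y}=y_2$ so that~\eqref{eq:RGAS} forces $\hat{x}(t)=x_2(t)$, then reapply~\eqref{eq:RGAS} with the true trajectory $x_1$ to recover~\eqref{eq:dIOSS} with $\alpha=\beta$, $\alpha_x=\beta_x$, $\alpha_u=\beta_u$, $\alpha_y=\beta_y$, $\lambda=\eta$. Your extra remark that $y_2\in\mathcal{M}_{\mathcal{Y}}$ (via Proposition~\ref{prop:f} and Assumption~\ref{ass:h}) is a small piece of bookkeeping the paper leaves implicit, but it does not change the argument.
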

\begin{proof}
	This proof follows similar lines as in \cite[Prop.~2.6]{Allan2021}, \cite[Prop.~3]{Knuefer2020}.
	Consider $\chi_1,{\chi}_2\in\mathcal{X}$, $u_1,{u}_2\in\mathcal{M}_{\mathcal{U}}$, and $d\in\mathcal{M}_{\mathcal{D}}$ yielding $x_i(t) = x(t,\chi_i,u_i,d)$ and $y_i(t) = y(t,\chi_i,u_i,d)$, $i=1,2$ for all $t\geq0$.
	Suppose that the observer $P$~\eqref{eq:obs} is designed to reconstruct the trajectory ${x}_2$ using $\bar{\chi} = {\chi}_2$, $\bar{u}={u}_2$, $\bar{y}={y}_2$.
	By application of~\eqref{eq:RGAS}, it follows that $\alpha_x(|x_2(t)-\hat{x}(t)|)=0$ for all $t\geq0$.
	Now assume that this certain design of $P$ is used to reconstruct the trajectory $x_1$.
	Then, since $\hat{x}(t) = x_2(t)$ for all $t\geq0$, the estimate~\eqref{eq:RGAS} directly yields~\eqref{eq:dIOSS} with $\alpha=\beta$, $\alpha_x=\beta_x$, $\alpha_u=\beta_u$, $\alpha_y=\beta_y$, and because $\chi_1,{\chi}_2,u_1,{u}_2,d$ were arbitrary, the system~\eqref{eq:sys} is i-iIOSS, which finishes this proof.	
\end{proof}

% Generated by IEEEtran.bst, version: 1.14 (2015/08/26)

\appendix

\subsection{Proof of Proposition~\ref{prop:f}}
\label{sec:app_f}
We first derive a bound on the difference of trajectories on a fixed time interval by adapting the results from \cite{Lipovan2000} and \cite{Bihari1956}.
This will also be crucial in proving Claim~\ref{prop:c} in~Section~\ref{sec:app_c}.

\begin{lemma}\label{lem:f}
	Let Assumption~\ref{ass:f} hold. Then, there exists some $\rho\in\mathcal{K}_{\infty}$ such that for each $\chi_1,{\chi}_2\in\mathcal{X}$, $u_1,{u}_2\in\mathcal{M}_{\mathcal{U}}$, and $d_1,{d}_2\in\mathcal{M}_{\mathcal{D}}$, there exists $T>0$ such that
	\begin{equation}\label{eq:lem_f}
		|x(t,\chi_1,u_1,d_1)-{x}(t,{\chi}_2,{u}_2,d_2)| \leq\ \rho^{-1}(\rho(c)e^t)
	\end{equation}
	for all $t\in[0,T)$ with
	\begin{equation}\label{eq:lem_f_c}
		c := |\chi_1-{\chi}_2| + T{\kappa}_1(3\|u_1-{u}_2\|_{0:T}) + T{\kappa}_1(3\|d_1-{d}_2\|_{0:T}).
	\end{equation}
\end{lemma}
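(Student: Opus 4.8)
The plan is to set up a Gronwall-type integral inequality for the difference of trajectories and then linearize it via the change of variables supplied by the Osgood function. Fix $\chi_1,\chi_2\in\mathcal{X}$, $u_1,u_2\in\mathcal{M}_{\mathcal{U}}$, $d_1,d_2\in\mathcal{M}_{\mathcal{D}}$, and let $T>0$ be a time up to which both solutions are known to exist (this is where local existence from the Osgood/Peano-type argument enters; global existence is what the lemma will ultimately help bootstrap, so here I only assume existence on some $[0,T)$). Write $e(t) := x(t,\chi_1,u_1,d_1)-x(t,\chi_2,u_2,d_2)$. From the integral form of \eqref{eq:sys_1} and Assumption~\ref{ass:f},
\begin{equation*}
  |e(t)| \leq |\chi_1-\chi_2| + \int_0^t \kappa_1\big(|(e(\tau),u_1(\tau)-u_2(\tau),d_1(\tau)-d_2(\tau))|\big)\,d\tau.
\end{equation*}
Using $|(a,b,c)|\leq |a|+|b|+|c|$, monotonicity of $\kappa_1$, the subadditivity-type bound $\kappa_1(r_1+r_2+r_3)\leq \kappa_1(3r_1)+\kappa_1(3r_2)+\kappa_1(3r_3)$ (this is exactly why the factor $3$ appears in Assumption~\ref{ass:f}), and the uniform bounds $\|u_1-u_2\|_{0:T}$, $\|d_1-d_2\|_{0:T}$, I arrive at
\begin{equation*}
  |e(t)| \leq c + \int_0^t \kappa_1\big(3|e(\tau)|\big)\,d\tau, \qquad t\in[0,T),
\end{equation*}
with $c$ as in \eqref{eq:lem_f_c}. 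Note $c>0$ whenever $\chi_1\neq\chi_2$; if $c=0$ one argues by a limiting argument or simply replaces $c$ by $c+\varepsilon$ and lets $\varepsilon\downarrow 0$ at the end, which also yields uniqueness as a corollary.

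Next I apply the Bihari--LaSalle / Osgood comparison technique. Define $\rho(s) := \int_1^s \frac{dr}{\kappa_1(3r)}$ (or $\int_{s_0}^s$ for a convenient basepoint); by the two divergence conditions in \eqref{eq:ass_f}, $\rho$ is a bijection from $(0,\infty)$ onto $\mathbb{R}$ that extends to a $\mathcal{K}_\infty$ function after the standard normalization (shifting so $\rho(0)=0$ and $\rho$ maps onto $[0,\infty)$; this bookkeeping I would do carefully but not belabor here). Let $w(t)$ denote the right-hand side $c+\int_0^t\kappa_1(3|e(\tau)|)\,d\tau$; then $w$ is absolutely continuous, $w(0)=c$, $|e|\leq w$, and $\dot w(t)=\kappa_1(3|e(t)|)\leq \kappa_1(3w(t))$ a.e. Hence $\frac{d}{dt}\rho(w(t)) = \dot w(t)/\kappa_1(3w(t)) \leq 1$, so $\rho(w(t))\leq \rho(c)+t$, i.e. $w(t)\leq \rho^{-1}(\rho(c)+t)$. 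Finally using $\rho(c)+t \leq \rho(c)e^t$ when $\rho(c)\geq 1$ (or handling the general case by adjusting the basepoint of $\rho$ so that $\rho\geq 1$ on the relevant range, or simply stating the bound with $\rho(c)+t$ and noting the paper's preferred cosmetic form), and monotonicity of $\rho^{-1}$, gives $|e(t)|\leq w(t)\leq \rho^{-1}(\rho(c)e^t)$ for all $t\in[0,T)$, which is \eqref{eq:lem_f}.

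The main obstacle I anticipate is not the inequality manipulation itself but the careful verification that $\rho$ constructed from $\int \frac{dr}{\kappa_1(3r)}$ can genuinely be taken in $\mathcal{K}_\infty$ with the two-sided divergence giving a proper bijection onto $[0,\infty)$, and that $\rho^{-1}$ is well-defined and monotone on the whole range needed — together with the low-regularity bookkeeping (only measurability/local boundedness of $u,d$, so $e$ is merely absolutely continuous and the differential inequality holds only a.e.), which requires the comparison argument to be run at the level of the absolutely continuous majorant $w$ rather than $e$ directly, plus the $c=0$ edge case. Adapting \cite{Lipovan2000,Bihari1956} to this generic input class is precisely the technical point flagged in the paper, but conceptually it is the standard Osgood/Bihari trick: majorize, substitute through the Osgood primitive, integrate the resulting $\dot{}\leq 1$ bound, and invert.
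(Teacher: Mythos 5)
Your setup and the core comparison argument coincide with the paper's proof: the same integral inequality $v(t)\leq c+\int_0^t\kappa_1(3v(s))ds$ obtained via the factor-3 splitting, the same absolutely continuous majorant, the same differentiation of the Osgood primitive, and the same $\varepsilon\downarrow 0$ treatment of $c=0$. The gap is in the final packaging, precisely the point you flagged as the "main obstacle" and then waved through. Your candidate $\rho(s)=\int_1^s\frac{dr}{\kappa_1(3r)}$ cannot be turned into a $\mathcal{K}_\infty$ function by "shifting so $\rho(0)=0$": by the first condition in \eqref{eq:ass_f} it tends to $-\infty$ as $s\to0^+$, so no additive normalization (nor change of basepoint, which is again only an additive shift) produces a function that is nonnegative and vanishes at zero. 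Likewise, the step $\rho(c)+t\leq\rho(c)e^t$ requires $\rho(c)\geq\sup_{t>0}t/(e^t-1)=1$, which cannot hold for all admissible $c$ — the lemma must cover arbitrarily small $|\chi_1-\chi_2|$ and input differences, and any genuine $\mathcal{K}_\infty$ function satisfies $\rho(c)\to0$ there. Finally, settling for the additive bound $\rho^{-1}(\rho(c)+t)$ does not prove the statement as written (which demands $\rho\in\mathcal{K}_\infty$ and the form $\rho^{-1}(\rho(c)e^t)$), and this is not merely cosmetic: the stated form is reused verbatim in Lemma~\ref{lem:xy_bound} and in \eqref{eq:proof_cont_4a}--\eqref{eq:proof_cont_4b}, where $\rho^{-1}(\rho(\cdot)e^{T})$ must be a class-$\mathcal{K}$ modulus of continuity vanishing at zero.

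The missing device is a one-line exponential composition, which is exactly what the paper does. Keep $G(s):=\int_1^s\frac{dr}{\kappa_1(3r)}$ as your primitive; the two divergence conditions in \eqref{eq:ass_f} give $G(s)\to-\infty$ as $s\to0^+$ and $G(s)\to\infty$ as $s\to\infty$. Then define $\rho(s):=e^{G(s)}$ for $s>0$ and $\rho(0):=0$: this $\rho$ is continuous, strictly increasing, $\rho(0^+)=0$, $\rho(\infty)=\infty$, hence $\rho\in\mathcal{K}_\infty$, and its inverse is well defined on $[0,\infty)$. Exponentiating your integrated inequality $G(w(t))\leq G(c)+t$ gives $\rho(w(t))\leq\rho(c)e^t$ exactly — the additive $t$ becomes the multiplicative factor $e^t$ with no further estimate — and therefore $|e(t)|\leq w(t)\leq\rho^{-1}(\rho(c)e^t)$, which is \eqref{eq:lem_f}. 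With this replacement your argument matches the paper's proof step for step.
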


\begin{proof}	
	Consider arbitrary $\chi_1,{\chi}_2\in\mathcal{X}$, $u_1,{u}_2\in\mathcal{M}_{\mathcal{U}}$, $d_1,{d}_2\in\mathcal{M}_{\mathcal{D}}$.
	The existence of the trajectories $x_i(t)=x(t,\chi_i,u_i,d_i)$, $t\in [0,t_i(\chi_i,u_i,d_i))$ is ensured for some $t_i(\chi_i,u_i,d_i)>0$, $i=1,2$ by continuity of $f$ (Assumption~\ref{ass:f}) and Peano's existence theorem (cf., e.g., \cite[Th.~2.1]{Hartman1964}).
	Let $T:=\min_{i\in{1,2}}\{t_i(\chi_i,u_i,d_i)\}$.
	Then, for all $t\in[0,T)$, the trajectories $x_1$ and $x_2$ satisfy
	\begin{align*}
		&\ x_1(t)-{x}_2(t)  = \chi_1-{\chi}_2 \\
		&\hspace{1ex} + \int_{0}^{t}(f(x_1(\tau),u_1(\tau),d_1(\tau))-f({x}_2(\tau),{u}_2(\tau),{d}_2(\tau)))d\tau.
	\end{align*}
	Define $v(t) = |x_1(t)-x_2(t)|$, $u_{\Delta}=u_1-{u}_2$, and $d_{\Delta} = d_1-{d}_2$. By applying \eqref{eq:ass_f_k}, the triangle inequality, and the fact that $\kappa_1$ is positive definite and non-decreasing, we can deduce that	
	\begin{equation}
		v(t)
		\leq v(0) + \int_{0}^{t}(\bar{\kappa}_1(v(s)) + \bar{\kappa}_1(|u_{\Delta}(s)|) + \bar{\kappa}_1(|d_{\Delta}(s)|))ds\label{eq:proof_f_0}
	\end{equation}
	with $\bar{\kappa}_1(s) := \kappa_1(3s)$.
	Note that
	\begin{align}
		&\ \int_{0}^{t}(\bar{\kappa}_1(|u_{\Delta}(s)|)  + \bar{\kappa}_1(|d_{\Delta}(s)|))ds\nonumber\\
		\leq&\ T(\bar{\kappa}_1(\|u_{\Delta}\|_{0:T}) + \bar{\kappa}_1(\|d_{\Delta}\|_{0:T})). \label{eq:proof_f_4}
	\end{align}
	By combining \eqref{eq:proof_f_0}, \eqref{eq:proof_f_4}, and the definition of $c$ from~\eqref{eq:lem_f_c}, we obtain
	\begin{equation}
		v(t)\leq c +  \int_{0}^{t}\bar{\kappa}_1(v(s)) ds. \label{eq:proof_f_1}
	\end{equation}
	We first assume that $c>0$.
	Denote by $U(t)$ the right-hand side of~\eqref{eq:proof_f_1}. Then, $U(0)=c$ and
	\begin{align}
		\dot{U}(t) &= \bar{\kappa}_1(v(t)) \leq \bar{\kappa}_1(U(t)). \label{eq:proof_f_2}
	\end{align}
	Now consider $G(s) := \int_{1}^{s}\frac{dr}{\bar{\kappa}_1(r)}$ for $s>0$.
	By Assumption~\ref{ass:f}, $\lim_{s\rightarrow 0^+}G(s)=-\infty$ and $\lim_{s\rightarrow \infty}G(s)=\infty$.
	Furthermore, from the Leibniz integral rule, it follows that
	\begin{equation}
		\frac{d}{dt}G({U}(t)) = \frac{d}{dt} \int_{1}^{U(t)}\frac{dr}{\bar{\kappa}_1(r)}
		= \frac{\dot{U}(t)}{\bar{\kappa}_1(U(t))}. \label{eq:proof_f_3}
	\end{equation}
	The combination of \eqref{eq:proof_f_2} and \eqref{eq:proof_f_3} yields $\frac{d}{dt}G({U}(t)) \leq 1$.
	An integration on $[0,t]$ leads to
	\begin{equation}
		G({U}(t))- G({U}(0)) \leq t \Leftrightarrow e^{G({U}(t))} \leq e^{G({U}(0))}e^t. \label{eq:proof_f_5}
	\end{equation}
	Now define $\rho(s):=e^{G(s)}$ for all $s>0$ and $\rho(0):=0$. It follows that $\rho\in\mathcal{K}_{\infty}$ (and thus $\rho^{-1}\in\mathcal{K}_{\infty}$).
	Since $v(t)\leq U(t)$ for all $t\in[0,T)$ and $U(0)=c$, from \eqref{eq:proof_f_5} and the definition of $\rho$ we can conclude that $v(t) \leq \rho^{-1}(\rho(c)e^t)$ for all $t\in[0,T)$.

	It remains to show that \eqref{eq:lem_f} also applies for $c=0$. Performing the same steps as before with $\epsilon>0$ instead of $c$ leads to $v(t) \leq \rho^{-1}(\rho(\epsilon)e^t)$. Letting $\epsilon\rightarrow0$ recovers \eqref{eq:lem_f} for $c=0$ and thus concludes this proof.
\end{proof}

\begin{myproof}{Proof of Proposition~\ref{prop:f}}
	Proposition~\ref{prop:f} is an immediate consequence of Lemma~\ref{lem:f}.
	First, we claim that solutions exist globally in time.
	Indeed, suppose not.
	Then, there exist $\chi\in \mathcal{X}$, $u\in\mathcal{M}_{\mathcal{U}}$, $d\in\mathcal{M}_{\mathcal{D}}$, and some finite time $T_1>0$ such that $\lim_{t\rightarrow T_1} |x(t)| = \infty$, where $x(t) = x(t,\chi,u,d)$.
	Applying Lemma~\ref{lem:f} with $\chi_1=\chi$, ${u}_1=u$, $d_1=d$, and $\chi_2=0$, ${u}_2\equiv 0$, ${d}_2\equiv 0$,~\eqref{eq:lem_f} yields
	$|x(t)| \leq \rho^{-1}(\rho(|\chi| + T_1(\bar{\kappa}_1(\|u\|_{0:T_1}))+\bar{\kappa}_1(\|d\|_{0:T_1}))e^t)$
	for $t\in[0,T_1)$.
	The right-hand side is bounded for $t\rightarrow T_1$, which contradicts finite escape time and hence implies that solutions exist globally on $\mathbb{R}_{\geq0}$.
	
	It remains to show uniqueness of solutions. To this end, assume that $x_1(t)=x(t,\chi,u,d)$ and $x_2(t)=x(t,\chi,u,d)$ represent two solutions of \eqref{eq:sys_1} on the interval $[0,T_2]$ for $T_2>0$ with the same initial conditions $\chi\in\mathcal{X}$ and inputs $u\in\mathcal{M}_{\mathcal{U}}$ and $d\in\mathcal{M}_{\mathcal{D}}$.
	It follows that $c=0$ in~\eqref{eq:lem_f_c} and $|x_1(t)-{x}_2(t)|=0$ for all $t\in[0,T_2]$ by~\eqref{eq:lem_f}, which proves uniqueness of solutions on $[0,T_2]$ and hence concludes this proof.
\end{myproof}

\subsection{Proof of Claim~\ref{prop:c}}
\label{sec:app_c}

To prove continuity, we first need an additional lemma.

\begin{lemma}\label{lem:xy_bound}
	Let Assumptions~\ref{ass:f} and~\ref{ass:h} hold. For every $T,r_\chi,r_u>0$, there exist constants $R_x(T,r_\chi,r_u)>0$ and $R(T,r_\chi,r_u)>0$ such that	
	\begin{align*}
		|x(t,\chi_1,u_1,d)-{x}(t,{\chi}_2,{u}_2,d)| &\leq R_x(T,r_\chi,r_u),\\
		|y(t,\chi_1,u_1,d)-{y}(t,{\chi}_2,{u}_2,d)| &\leq R_y(T,r_\chi,r_u)
	\end{align*}
	for all $t\in[0,T]$, all $\chi_1,\chi_2\in\mathcal{X}$ satisfying $|\chi_1-{\chi}_2|\leq r_\chi$, all $u_1,u_2\in\mathcal{M}_{\mathcal{U}}$ satisfying $\int_0^\infty\eta^{-s}\alpha(|u_1(s)-u_2(s)|)ds \leq r_u$ for some $\alpha\in\mathcal{K}_{\infty}$ with $\alpha(s)\geq{\kappa}_1(3s)$ for all $s\geq0$ and $\eta\in[0,1)$, and all $d\in\mathcal{M}_{\mathcal{D}}$.
\end{lemma}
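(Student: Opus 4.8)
The plan is to combine the trajectory bound of Lemma~\ref{lem:f} with a bootstrapping argument that promotes the \emph{local} estimate~\eqref{eq:lem_f} (valid only on $[0,T)$ with $T$ the minimal existence time, and with the constant $c$ depending on that same $T$) into a \emph{uniform} bound on the fixed interval $[0,T]$ that is independent of the particular $\chi_1,\chi_2,u_1,u_2,d$ within the stated balls. First I would set $u_\Delta = u_1 - u_2$ and observe that the hypothesis $\int_0^\infty \eta^{-s}\alpha(|u_\Delta(s)|)\,ds \le r_u$ with $\alpha(s) \ge \kappa_1(3s) = \bar\kappa_1(s)$ and $\eta \in [0,1)$ gives, for any $t \in [0,T]$,
\begin{equation*}
	\int_0^t \bar\kappa_1(|u_\Delta(s)|)\,ds \le \int_0^t \alpha(|u_\Delta(s)|)\,ds \le \int_0^\infty \eta^{-s}\alpha(|u_\Delta(s)|)\,ds \le r_u,
\end{equation*}
since $\eta^{-s} \ge 1$. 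This is the key point: although $\|u_\Delta\|_{0:T}$ itself is \emph{not} controlled by $r_u$ (the class of inputs is non-compact, which is exactly why the integral formulation is used), the relevant \emph{integral} quantity $\int_0^t \bar\kappa_1(|u_\Delta(s)|)\,ds$ is. Re-running the Gronwall–Bihari estimate from the proof of Lemma~\ref{lem:f}, but carrying this integral term inside rather than bounding it crudely by $T\bar\kappa_1(\|u_\Delta\|_{0:T})$, yields
\begin{equation*}
	v(t) \le v(0) + \int_0^t \bar\kappa_1(v(s))\,ds + \int_0^t \bar\kappa_1(|u_\Delta(s)|)\,ds \le \big(r_\chi + r_u\big) + \int_0^t \bar\kappa_1(v(s))\,ds =: c' + \int_0^t \bar\kappa_1(v(s))\,ds,
\end{equation*}
with $c' := r_\chi + r_u$ now \emph{independent} of $T$ and of the specific data. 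Exactly as in Lemma~\ref{lem:f} (introducing $\rho(s) = e^{G(s)}$ with $G(s) = \int_1^s dr/\bar\kappa_1(r)$ and applying the comparison argument, treating $c'=0$ by the $\epsilon\to 0$ limit), this gives $v(t) \le \rho^{-1}(\rho(c')e^t) \le \rho^{-1}(\rho(r_\chi + r_u)e^T)$ for all $t$ in the maximal interval of existence intersected with $[0,T]$. Since this is an a priori bound that is finite and uniform on $[0,T]$, it precludes finite escape before $T$ (as in the proof of Proposition~\ref{prop:f}), so the solutions are in fact defined on all of $[0,T]$ and we may take $R_x(T,r_\chi,r_u) := \rho^{-1}(\rho(r_\chi + r_u)e^T)$.

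For the output bound I would then invoke Assumption~\ref{ass:h}: since $d_1 = d_2 = d$,
\begin{equation*}
	|y(t,\chi_1,u_1,d) - y(t,\chi_2,u_2,d)| \le \kappa_2\big(|(x_1(t),u_1(t)) - (x_2(t),u_2(t))|\big) \le \kappa_2\big(|x_\Delta(t)| + |u_\Delta(t)|\big),
\end{equation*}
using monotonicity of $\kappa_2$ and the triangle inequality. The first summand is bounded by $R_x(T,r_\chi,r_u)$, but the second, $|u_\Delta(t)|$, is again \emph{not} pointwise controlled by $r_u$ — so a naive bound fails and this is where I expect the main obstacle. The resolution is that the i-iIOSS Lyapunov bound is ultimately applied under an integral against $\lambda^{t-\tau}$, so what is really needed downstream (in Appendix~\ref{sec:app_c}) is not a uniform sup-bound on $|y_\Delta|$ but a uniform bound on $\int_0^t \lambda^{t-\tau}\alpha_y(|y_\Delta(\tau)|)\,d\tau$; using subadditivity-type estimates for $\kappa_2$ and $\alpha_y$ (or weak triangle-inequality bounds $\kappa(a+b) \le \kappa(2a) + \kappa(2b)$), the term involving $|u_\Delta|$ gets folded back into an integral of the form $\int_0^t \eta^{-\tau}\alpha(|u_\Delta(\tau)|)\,d\tau \le r_u$ after matching the discount factors. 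Concretely, I would state $R_y$ in terms of the $L^\infty$-bound $R_x$ together with the integral budget $r_u$, e.g.\ absorbing the unbounded pointwise contribution of $|u_\Delta|$ into $R_y$ only in the averaged sense that is actually used; if instead a genuine pointwise bound on $|y_\Delta|$ is wanted, one restricts to the a.e.-finite values and notes the bound holds for a.e.\ $t$, which is all the essential-supremum formulation requires. Either way, the construction of $R_x$ is the substantive step, and $R_y$ follows by one application of Assumption~\ref{ass:h} and bookkeeping on the discount factors; I would place the detailed matching of $\eta$ versus $\lambda$ at the point of use in Appendix~\ref{sec:app_c} rather than inflating the lemma statement.
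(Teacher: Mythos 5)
Your treatment of the state bound coincides with the paper's proof: the only change to Lemma~\ref{lem:f} is to replace \eqref{eq:proof_f_4} by $\int_0^t\kappa_1(3|u_\Delta(s)|)\,ds\le\int_0^\infty\eta^{-s}\alpha(|u_\Delta(s)|)\,ds\le r_u$ (using $\eta^{-s}\ge1$ and $\alpha(s)\ge\kappa_1(3s)$), so that $c=r_\chi+r_u$ in \eqref{eq:lem_f_c} and \eqref{eq:lem_f} gives $R_x(T,r_\chi,r_u)=\rho^{-1}(\rho(r_\chi+r_u)e^T)$ uniformly on $[0,T]$. This part is correct and essentially identical to the paper; your additional no-finite-escape argument is harmless but unnecessary, since Proposition~\ref{prop:f} already gives global existence under Assumption~\ref{ass:f}.

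The output bound is where your proposal stops short of the statement. The lemma asserts $|y_\Delta(t)|\le R_y(T,r_\chi,r_u)$ for all $t\in[0,T]$, and the paper obtains it in one line from Assumption~\ref{ass:h} and the weak triangle inequality, $|y_\Delta(t)|\le\kappa_2(2|x_\Delta(t)|)+\kappa_2(2|u_\Delta(t)|)\le\kappa_2(2R_x)+\kappa_2(2r_u)$, i.e.\ by bounding $|u_\Delta(t)|$ pointwise by $r_u$. Your observation that the integral hypothesis $\int_0^\infty\eta^{-s}\alpha(|u_\Delta(s)|)\,ds\le r_u$ does not furnish such a pointwise (or even essential) bound is correct — a spike of large amplitude and small support has arbitrarily small weighted energy — so your concern in fact points at a step the paper glosses over rather than at a difficulty you created. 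However, having diagnosed the obstacle, you do not prove the inequality the lemma claims: your primary suggestion proves a different statement (a bound on the discounted integral of $\alpha_y(|y_\Delta|)$), which would also force a reworking of the proof of Claim~\ref{prop:c}, where the pointwise bound $R_y^C$ is what licenses replacing $\alpha_y$ by the majorant $\hat\alpha_y$ from uniform continuity on the compact set $[0,R_y^C]$; and your fallback claim that the bound ``holds for a.e.\ $t$'' is false in general: for $h$ genuinely depending on $u$ (e.g.\ $h(x,u,d)=u$) there is no constant depending only on $(T,r_\chi,r_u)$ bounding $|y_\Delta(t)|$ even almost everywhere, uniformly over all admissible input pairs. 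So, judged against Lemma~\ref{lem:xy_bound} as stated, the second half of your proposal is a genuine gap: you must either import a pointwise ingredient on $u_\Delta$ (as the paper's estimate implicitly does) or modify the lemma and its use downstream, and your proposal does neither conclusively.
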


\begin{proof}
	For $i=1,2$, let $x_i(t) = x(t,\chi_i,u_i,d)$ and $y_i(t)={x}(t,\chi_i,u_i,d)$, $t\geq0$, where we note that Proposition~\ref{prop:f} applies due to satisfaction of Assumption~\ref{ass:f}.
	Define $u_{\Delta}:=u_1-u_2$.
	We can invoke the same arguments as in the proof of Lemma~\ref{lem:f}, where \eqref{eq:proof_f_4} can be replaced by
	\begin{equation*}
		\int_0^t{\kappa}_1(3|u_{\Delta}(s)|)ds \leq \int_0^\infty \eta^{-s}\alpha(|u_{\Delta}(s)|)ds \leq r_u,
	\end{equation*}
	exploiting that $\eta^{-s}\geq1$ for all $s\geq0$.
	Consequently, we obtain $c=r_{\chi} + r_u$ in~\eqref{eq:lem_f_c},
	which by~\eqref{eq:lem_f} implies that
	\begin{equation*}
		|x_1(t)-{x}_2(t)| \leq \rho^{-1}(\rho(r_{\chi} + r_u)e^T) =: R_x(T,r_\chi,r_u)
	\end{equation*}
	uniformly for all $t\in[0,T]$.
	For the second part, the application of~\eqref{eq:sys_2} in combination with Assumption~\ref{ass:h} leads to
	\begin{align*}
		&\ |y_1(t)-{y}_2(t)|
		\leq \kappa_2(|(x_1(t),u_1(t)) - ({x}_2(t),{u}_2(t))|)\\
		\leq&\ {\kappa}_2(2|x_1(t)-{x}_2(t)|) + {\kappa}_2(2|u_1(t)-{u}_2(t)|)\\
		\leq&\ {\kappa}_2(2R_x(T,r_\chi,r_u)) + {{\kappa}_2(2r_u)}
		=: R_y(T,r_\chi,r_u)
	\end{align*}
	for all $t\in[0,T]$, which finishes this proof.
\end{proof}

\begin{myproof}{Proof of Claim~\ref{prop:c}}
	The proof uses mostly similar arguments as in \cite[Th.~3.5]{Allan2021}, with variations due to the continuous-time setting and the class of inputs considered (in particular, Lemmas~\ref{lem:f} and \ref{lem:xy_bound}). It consists of two parts. First, we show that choosing $(\chi_1,\chi_2)$ in a compact set implies that the right-hand side of \eqref{eq:def_lyap} is the same when restricting $t$, $(u_1,u_2)$ to suitable sets; then, we use this property to establish continuity of $U$.

	\textit{Part I.} Define $\mathcal{B}(C):=\{(\chi_1,{\chi}_2) \in \mathcal{X}\times\mathcal{X} : 1/C\leq |\chi_1-\chi_2|\leq C\}$ for $C\geq1$ and consider $(\chi_1,{\chi}_2)\in\mathcal{B}(C)$.
	Then, for any $\epsilon>0$, there exist inputs  $u^{\epsilon}_1,u^{\epsilon}_2\in\mathcal{M}_{\mathcal{U}}$, $d^{\epsilon}\in\mathcal{M}_{\mathcal{D}}$, and a time $t^{\epsilon}\geq0$ such that
	\begin{align}
		&\ \alpha(|\chi_{\Delta}|) \leq U(\chi_1,{\chi}_2)\label{eq:proof_cont_1_a}\\
		\leq&\ \epsilon +	\lambda^{-t^{\epsilon}/2}\Big(\alpha(|x(t^{\epsilon},\chi_1,u^{\epsilon}_1,d^{\epsilon})-x(t^{\epsilon},\chi_2,u^{\epsilon}_2,d^{\epsilon})|)\nonumber\\
		&- \int_0^{\infty}\lambda^{t^{\epsilon}-\tau}2\alpha_u(|u^{\epsilon}_1(\tau)-u^{\epsilon}_2(\tau)|)d\tau\nonumber\\
		& - \int_0^{t^{\epsilon}}\lambda^{t^{\epsilon}-\tau} \alpha_y(|y(t^{\epsilon},\chi_1,u^{\epsilon}_1,d^{\epsilon})-y(t^{\epsilon},\chi_2,u^{\epsilon}_2,d^{\epsilon})|)d\tau \Big) \nonumber\\
		\leq&\ \epsilon +  \lambda^{-t^{\epsilon}/2}\Big(\alpha_x(|\chi_{\Delta}|)\lambda^{t^{\epsilon}}\nonumber\\
		&\hspace{11ex}- \int_0^{\infty}\lambda^{t^{\epsilon}-\tau}\alpha_u(|u^{\epsilon}_1(\tau)-{u}^{\epsilon}_2(\tau)|) d\tau \Big), \label{eq:proof_cont_1}
	\end{align}
	where $\chi_{\Delta}=\chi_1-\chi_2$ and the last inequality follows from i-iIOSS~\eqref{eq:dIOSS}. Consequently,
	\begin{equation}
		\alpha(|\chi_{\Delta}|) \leq \epsilon + \lambda^{t^{\epsilon}/2}\alpha_x(|\chi_{\Delta}|). \label{eq:proof_cont_2}
	\end{equation}
	Choose $\epsilon\leq \bar{\epsilon}(C):=\alpha(1/C)/2$ and recall that $1/C\leq|\chi_{\Delta}|\leq C$. Thus, \eqref{eq:proof_cont_2} yields
	$\alpha(1/C)/2 \leq \lambda^{t^{\epsilon}/2}\alpha_x(C)$,
	which leads to
	\begin{align}
		t^{\epsilon} \leq 2\log_{\lambda}\left(\frac{\alpha(1/C)}{2\alpha_x(C)}\right)=: T(C),
	\end{align}
	where $0<\frac{\alpha(1/C)}{2\alpha_x(C)}<1$. From \eqref{eq:proof_cont_1} and the fact that $\epsilon\leq\bar{\epsilon}(C)$, we also obtain
	\begin{align*}
		&\ \int_0^{\infty}\lambda^{t^{\epsilon}-\tau}\alpha_u(|u^{\epsilon}_1(\tau)-{u}_2^{\epsilon}(\tau)|)d\tau\\
		\leq&\ \alpha_x(C)\lambda^{t^{\epsilon}}-  \frac{1}{2}\alpha(1/C)\lambda^{t^{\epsilon}/2} \leq \alpha_x(C).
	\end{align*}
	Since $t^\epsilon\in [0,T(C)]$, it follows that $\lambda^{t^\epsilon}\geq\lambda^{T(C)}$; hence,
	\begin{align*}
		\int_0^{\infty}\lambda^{-\tau}\alpha_u(|u^\epsilon_1(\tau)-{u}^{\epsilon}_2(\tau)|)d\tau
		\leq \alpha_x(C)\lambda^{-T(C)}=: r_u(C).
	\end{align*}
	As a result, we can infer that 
	$(u_1^\epsilon,u_2^\epsilon)\in\mathcal{B}_u(C):=\{(u_1,u_2)\in\mathcal{M}_{\mathcal{U}}\times\mathcal{M}_{\mathcal{U}} : \int_0^{\infty}\lambda^{-\tau}\alpha_u(|u^\epsilon_1(\tau)-{u}^{\epsilon}_2(\tau)|)d\tau \leq r_u(C)\}$.
		
	\textit{Part II}. Now consider some $\tilde{\chi}_1,\tilde{\chi}_2\in\mathcal{X}$ with $\tilde{\chi}_1\neq\tilde{\chi}_2$. Set $C=2\max\{|\tilde{\chi}_1-\tilde{\chi}_2|,1/|\tilde{\chi}_1-\tilde{\chi}_2|\}\geq1$.
	From the first part of this proof, for $({\chi}_1,{\chi}_2)\in\mathcal{B}(C)$, there exist $\epsilon\in(0,\bar{\epsilon}(C)]$, $(u^{\epsilon}_1,u^{\epsilon}_2)\in\mathcal{B}_u(C)$, $d^{\epsilon}\in\mathcal{M}_{\mathcal{D}}$, and $t^{\epsilon}\in[0,T(C)]$ such that \eqref{eq:proof_cont_1} holds.
	Define
	\begin{align*}
		&x_1(t) := x(t,\chi_1,u^{\epsilon}_1,d^{\epsilon}),
		&x_2(t) := x(t,\chi_2,u^{\epsilon}_2,d^{\epsilon}),\\
		&\tilde{x}_1(t) := x(t,\tilde{\chi}_1,u^{\epsilon}_1,d^{\epsilon}),
		&\tilde{x}_2(t) := x(t,\tilde{\chi}_2,u^{\epsilon}_2,d^{\epsilon}),\\
		&y_1(t) := y(t,\chi_1,u^{\epsilon}_1,d^{\epsilon}),
		&y_2(t) := y(t,\chi_2,u^{\epsilon}_2,d^{\epsilon}),\\
		&\tilde{y}_1(t) := y(t,\tilde{\chi}_1,u^{\epsilon}_1,d^{\epsilon}),
		&\tilde{y}_2(t) := y(t,\tilde{\chi}_2,u^{\epsilon}_2,d^{\epsilon})\phantom{,}
	\end{align*}
	for all $t\in[0,T(C)]$.
	The trajectories $\tilde{x}_1(t)$ and $\tilde{x}_2(t)$ satisfy
	\begin{align}
	U(\tilde{\chi}_1,\tilde{\chi}_2) \geq &\ \lambda^{-t^{\epsilon}/2}\Big( \alpha(|\tilde{x}_1(t^{\epsilon})-\tilde{x}_2(t^{\epsilon})|)  \nonumber\\
	&\hspace{3ex} - \int_0^{\infty}\lambda^{t^{\epsilon}-\tau}2\alpha_u(|u^\epsilon_1(\tau)-{u}^{\epsilon}_2(\tau)|)d\tau\nonumber\\
	&\hspace{3ex} - \int_0^{t^{\epsilon}}\lambda^{t^{\epsilon}-\tau} \alpha_y(|\tilde{y}_1(\tau)-\tilde{y}_2(\tau)|)d\tau\Big).\label{eq:proof_cont_1_b}
	\end{align}
	The combination of \eqref{eq:proof_cont_1_a} and \eqref{eq:proof_cont_1_b} yields
	\begin{align}
	&\hspace{3.4ex} U(\chi_1,{\chi}_2)-U(\tilde{\chi}_1,\tilde{\chi}_2)\nonumber\\
	&\leq\ \epsilon
	+ \lambda^{-t^{\epsilon}/2}\Big(
	\alpha(|x_1(t^{\epsilon})-x_2(t^{\epsilon})|)- \alpha(|\tilde{x}_1(t^{\epsilon})-\tilde{x}_2(t^{\epsilon})|)\nonumber\\
	& \ + \int_0^{t^{\epsilon}}\hspace{-0.5ex}\lambda^{t^{\epsilon}{-}\tau}(\alpha_y(|\tilde{y}_1(\tau){-}\tilde{y}_2(\tau)|){\,-\,}\alpha_y(|{y}_1(\tau){-}{y}_2(\tau)|) 
	)d\tau
	\Big).\label{eq:proof_cont_A}
	\end{align}
	
	Without loss of generality, we assume that\footnote{
		If this is violated, simply replace $\alpha_u$ in~\eqref{eq:dIOSS} by a suitable $\bar{\alpha}_u\in\mathcal{K}_{\infty}$ that majorizes both $\alpha_u$ and ${\kappa_1}(3s)$.
	} $\alpha_u(s)\geq{\kappa_1}(3s)$ for all $s\geq0$.
	By Lemma~\ref{lem:xy_bound}, there exist $R_x,R_y>0$ such that
	\begin{align*}
		\max\{|x_1(t){-}x_2(t)|,|\tilde{x}_1(t){-}\tilde{x}_2(t)|\}
		&\leq R_x(T(C),C,r_u(C))\\
		&=: R_x^C,\\
		\max\{|y_1(t){-}y_2(t)|,|\tilde{y}_1(t){-}\tilde{y}_2(t)|\} &\leq R_y(T(C),C,r_u(C))\\
		& =: R_y^C
	\end{align*}
	uniformly for all $t\in[0,T(C)]$.
	Recall that $\alpha,\alpha_y$ in~\eqref{eq:proof_cont_A} are continuous; hence, they are uniformly continuous on the compact sets $[0,R_x^C]$ and $[0,R_y^C]$, respectively.
	From \cite[Prop.~20]{Allan2017}, there exist $\hat{\alpha},\hat{{\alpha}}_y\in\mathcal{K}_{\infty}$ such that
	\begin{align}
		|\alpha(s_1)-\alpha(s_2)| \leq \hat{\alpha}(|s_1-s_2|), \ s_1,s_2\in [0,R_x^C],\label{eq:proof_cont_hat_a_1}\\
		|\alpha_y(s_1)-\alpha_y(s_2)| \leq \hat{\alpha}_y(|s_1-s_2|), \ s_1,s_2\in [0,R_y^C].\label{eq:proof_cont_hat_a_2}
	\end{align}
	Evaluating the absolute value of the right-hand side of \eqref{eq:proof_cont_A}, using the triangle inequality, applying \eqref{eq:proof_cont_hat_a_1} and \eqref{eq:proof_cont_hat_a_2} followed by the reverse triangle inequality and then the standard one lead us to	
	\begin{align}
		&\ U(\chi_1,{\chi}_2)-U(\tilde{\chi}_1,\tilde{\chi}_2)\nonumber\\
		\leq&\ \epsilon
		+ \lambda^{-t^{\epsilon}/2}\Big(
		\hat{\alpha}\big(|x_1(t^{\epsilon})-\tilde{x}_1(t^{\epsilon})|+ |x_2(t)-\tilde{x}_2(t^{\epsilon})|\big)\nonumber\\
		& + \int_0^{t^{\epsilon}}\lambda^{t^{\epsilon}-\tau}\big(
		\hat{\alpha}_y\big(|{y}_1(\tau)-\tilde{y}_1(\tau)|+|{y}_2(\tau)-\tilde{y}_2(\tau)|\big)
		\big)d\tau
		\Big).\label{eq:proof_cont_3}
	\end{align}
	By applying Lemma~\ref{lem:f} and similar steps as in the proof of Lemma~\ref{lem:xy_bound}, it follows that
	\begin{align}
		|x_i(t)-\tilde{x}_i(t)| &\leq 
		\rho^{-1}(\rho(|\chi_i-\tilde{\chi}_i|)e^{T(C)}), i=1,2,\label{eq:proof_cont_4a}\\
		|{y}_i(t)-\tilde{y}_i(t)| &\leq \kappa_2(\rho^{-1}(\rho(|\chi_i-\tilde{\chi}_i|)e^{T(C)})), i=1,2\label{eq:proof_cont_4b}
	\end{align}
	for all $t\in[0,T(C)]$.
	Hence, from~\eqref{eq:proof_cont_3}, using that $\alpha(|a+b|)\leq \alpha(2a) + \alpha(2b)$ for any $\alpha\in\mathcal{K}$ and $a,b\geq0$ in conjunction with the bounds from \eqref{eq:proof_cont_4a} and \eqref{eq:proof_cont_4b} and the facts that $\lambda^{-t^{\epsilon}/2}\leq\lambda^{-T(C)/2}$ and $\int_{0}^{t^{\epsilon}}\lambda^{t^\epsilon-\tau}d\tau \leq -1/\ln\lambda$, we can infer that there exist $\gamma_x,\gamma_y\in\mathcal{K}$ satisfying
	\begin{align*}
		U(\chi_1,{\chi}_2)-U(\tilde{\chi}_1,\tilde{\chi}_2)
		\leq \epsilon &+ \gamma_x(|\chi_1-\tilde{\chi}_1|) + \gamma_x(|\chi_2-\tilde{\chi}_2|) \\
		& + \gamma_y(|\chi_1-\tilde{\chi}_1|) + \gamma_y(|\chi_2-\tilde{\chi}_2|)\\
		\leq \epsilon &+ \gamma(|\chi_1-\tilde{\chi}_1|) + \gamma(|\chi_2-\tilde{\chi}_2|)
	\end{align*}
	where $\gamma(s):= \gamma_x(s)+\gamma_y(s)$ for all $s\geq0$.	
	Letting $\epsilon\rightarrow0$ and applying a symmetric argument (recall that $(\tilde{\chi}_1,\tilde{\chi}_2)\in\mathcal{B}(C)$ by the definition of $C$) lets us conclude that
	\begin{equation}
		|U(\chi_1,{\chi}_2)-U(\tilde{\chi}_1,\tilde{\chi}_2)| \leq \gamma(|\chi_1-\tilde{\chi_1}|) + \gamma(|\chi_2-\tilde{\chi}_2|).\label{eq:proof_cont_4}
	\end{equation}
	Since $\mathcal{B}(C)$ contains all pairs $(x_1,x_2)$ within a neighborhood of $(\tilde{\chi}_1,\tilde{\chi}_2)$, \eqref{eq:proof_cont_4} implies that $U$ is continuous at each $(\tilde{\chi}_1,\tilde{\chi}_2)\in\mathcal{X}\times\mathcal{X}$ for $\tilde{\chi}_1\neq\tilde{\chi}_2$.
	It remains to show that $U$ is also continuous at $(\tilde{\chi},\tilde{\chi})$.
	To this end, consider any $({\chi}_1,{\chi}_2)\in\mathcal{X}\times\mathcal{X}$; since $U(\tilde{\chi},\tilde{\chi})=0$, it follows that
	\begin{align*}
		&\ |U({\chi}_1,{\chi}_2){\,-\,}U(\tilde{\chi},\tilde{\chi})| = U({\chi}_1,{\chi}_2) 
		\leq \alpha_x(|{\chi}_1{\,-\,}{\chi}_2|)\\
		\leq&\ \alpha_x(|{\chi}_1{\,-\,}\tilde{\chi}|{\,+\,}|\tilde{\chi}{\,-\,}{\chi}_2|) \leq \alpha_x(2|{\chi}_1{\,-\,}\tilde{\chi}|){\,+\,}\alpha_x(2|\chi_2{\,-\,}\tilde{\chi}|),
	\end{align*}
	which implies that $U$ is continuous at $(\tilde{\chi},\tilde{\chi})$. Hence, $U$ is continuous on $\mathcal{X}\times\mathcal{X}$, which finishes this proof.	
\end{myproof}

\begin{remark}\label{rem:local_L}
	Part I of the proof of Claim~\ref{prop:c} gives rise to the fact that $(u_1^{\epsilon},u_2^{\epsilon})\in\mathcal{B}_u(C)$, i.e., the inputs $u_1^{\epsilon}$ and $u_2^{\epsilon}$ are such that the weighted ``energy'' of its difference is located in a ball of radius $r_u$ centered at the origin.
	However, this implies no information about the absolute range of $u_1$ and $u_2$, which prevents the use of a local Lipschitz property of $f$ to bound the evolution of the difference of state trajectories in the proof of Claim~\ref{prop:c} below~$\eqref{eq:proof_cont_A}$ and in \eqref{eq:proof_cont_4a}.
	In contrast, the global nature of Assumption~\ref{ass:f} allows the derivation of such a bound, and the conditions in~\eqref{eq:ass_f} ensure that it is finite for any finite $t$.	
\end{remark}

\end{document}